\documentclass[journal]{IEEEtran}

\usepackage{booktabs}
\usepackage{listings}
\usepackage{xcolor}
\usepackage{graphicx}
\usepackage{amsmath}
\usepackage{amssymb}
\usepackage{algorithm}
\usepackage{algpseudocode}
\usepackage{hyperref}
\usepackage{multirow}
\usepackage{array}
\usepackage{url}
\usepackage{balance}
\usepackage{amsthm}
\usepackage{enumerate}
\newtheorem{definition}{Definition}

\newtheorem{proposition}{Proposition}

\begin{document}

\title{Characterizing and Fixing Silent Data Loss\\
in Spark-on-AWS-Lambda with Open Table Formats}

\author{Srujan Kumar~Gandla%
\thanks{Independent Researcher,
NorthLake, Texas, USA. E-mail: gandlasrujan3747@gmail.com.}}

\maketitle

% ============================================================
\begin{abstract}
For several years, running PySpark in AWS Lambda has been a cost-effective
alternative to paid managed services such as EMR or Glue, providing
75--80\% cost savings for teams across the industry. What most practitioners
don't realize, however, is that Lambda has a 15-minute hard limit on
function execution, enforced with a \texttt{SIGKILL} signal rather than
\texttt{SIGTERM}. Open table formats such as Delta Lake and Apache Iceberg
follow a two-stage write policy: first, the new data file is written to S3
and packaged into one or more Parquet files, and then a metadata update is
applied to the table descriptor. Killing the function between these two
steps results in committed but unusable data---the table appears unchanged,
throws no exceptions, and allows the pipeline to continue as if the write
had never occurred. We call this the \emph{commit-durability gap}. Through
860 fault-injection experiments across Delta Lake and Iceberg at three
dataset sizes, we consistently observed every unprotected kill within this
commit window produce silent data loss, with no observable signal indicating
any failure. To address this, we developed \emph{SafeWriter}, a lightweight
Python context manager that periodically checks for an imminent timeout and,
on detection, rolls back the write to the prior table version. Applied to
100 kill scenarios of varying dataset sizes, SafeWriter achieved 100\%
rollback success with average overhead under 100\,ms. All code and datasets
are publicly available.
\end{abstract}

\begin{IEEEkeywords}
Serverless computing, Apache Spark, Delta Lake, Apache Iceberg,
fault tolerance, data lakes, AWS Lambda, commit durability, cloud storage,
two-phase commit, SIGKILL.
\end{IEEEkeywords}

% ============================================================
\section{Introduction}
\label{sec:intro}

AWS Lambda charges \$0.0000047 per millisecond of function execution and
fully reclaims the container between invocations. There are no idle nodes
and no cluster to maintain. Spark-on-AWS-Lambda (SoAL)~\cite{soal2022} packages
an entire PySpark runtime into a Lambda container image, making it possible
to run full batch jobs without provisioning a single node. This approach
has been reported to save 75--80\% over managed services like EMR and Glue,
and teams have adopted it for everything from nightly ETL and feature
engineering to one-off analytics tasks.

There has been considerable innovation in layering ACID semantics, schema
evolution, and time-travel queries on top of object stores like S3.
Transactional lakehouse formats---Delta Lake~\cite{delta2020} and Apache
Iceberg~\cite{iceberg2018}---are a natural pairing with SoAL: cheap
serverless compute backed by transactional storage.
But things get complicated quickly when two systems interact at their
boundaries, and those boundaries are not always easy to manage.

The 900-second limit is hard. Lambda will \texttt{SIGKILL} the container
after 900 seconds. The JVM won't let you do anything about it, Python's
signal module raises \texttt{OSError} if you try to register a handler for
\texttt{SIGKILL}, and any \texttt{atexit} or shutdown hook code you've
written will never run. Your process is simply terminated wherever it
happens to be.

This is a real problem because all three formats go through two steps when
writing data: first the Parquet files are written to S3, then the metadata
is updated so that others can access those files. If a kill occurs between
those two steps, the files remain on S3 without a pointer to them, and
queries return the table as it was before the kill. No exception is thrown.
CloudWatch logs a timeout. The next pipeline stage processes stale data
without any indication that something went wrong.

Debugging after the fact isn't straightforward either. The simplest signal
to check is the Lambda exit code, but $-9$ is indistinguishable from an OOM
kill or a segfault---it gives no indication that a partial write was the
cause. Row count checks pass because the committed table state never
changed. The only trace left behind is a handful of orphaned Parquet files
quietly accumulating storage charges on S3.

Even with retries configured through Lambda's Dead Letter Queue, the
problem compounds: each retry attempt can add more stranded files on top
of the ones the previous attempt left behind. Spark's internal fault
tolerance is designed for executor failures inside the JVM. A driver-level
kill from the kernel is a fundamentally different event---one Spark was
not built to handle.

Silent data loss in data pipelines is genuinely hard to prevent, and harder to detect after the fact. This paper first identifies and shows with formal reasoning that Lambda functions introduce a \emph{commit-durability gap} that monitoring systems relying on Lambda exit codes and table row counts cannot detect (\S\ref{sec:problem}). We then present a kill-injection framework that delivers \texttt{SIGKILL} at specific points in the commit sequence of formats like Delta Lake and Iceberg (\S\ref{sec:methodology}). Across 860 experiments at all dataset scales, every unprotected kill resulted in silent data loss. We use this result to derive a probability model estimating how likely a given job is to land in the gap (\S\ref{sec:results}). Finally, we present SafeWriter, which closes the gap entirely, achieving 100\% clean rollback with under 100\,ms overhead (\S\ref{sec:fix}).

This work complements LST-Bench~\cite{lstbench2024}, which benchmarks
open table format query and write \emph{performance}. Here we focus on
the separate question of \emph{correctness} when writes are subject to
hard process termination.

% ============================================================
\section{Background}
\label{sec:background}

\subsection{Spark-on-AWS-Lambda (SoAL)}

SoAL packages the PySpark runtime together with Java and Hadoop into a
Docker container image that runs on AWS Lambda. The entry point is a
\texttt{lambda\_handler} function that downloads a PySpark script from
S3 and executes it via \texttt{spark-submit}. Spark runs in
\texttt{local[*]} mode, which lets it use all vCPUs and memory available
inside the Lambda container---up to 10 GB RAM and 6 vCPUs in the largest
configurations.

Open table formats are difficult to use with SoAL because Spark makes
strong assumptions about process lifetime that SoAL overturns. Spark
drivers are designed to run indefinitely; jobs that don't checkpoint can
use lineage recovery to reconstruct past results. SoAL imposes a hard ceiling of 900 seconds per invocation and then kills
the container with no checkpoint, no recovery opportunity, and no warning.

\subsection{How Lambda Kills a Job}

To be precise about what Lambda does: it stops a job by sending
\texttt{SIGKILL} (signal~9) to the entire container process group, not
\texttt{SIGTERM} (signal~15). There is no graceful shutdown.
\texttt{SIGKILL} has no signal handler---Python raises \texttt{OSError}
if you try to install one---and JVM shutdown hooks registered through
\texttt{Runtime.getRuntime().addShutdownHook()} are never called. The
kernel closes file descriptors, reclaims memory directly, and terminates
all threads simultaneously. There is no \texttt{try/finally}, no
\texttt{atexit}, and no daemon thread cleanup. From the process's point
of view, execution is not stopped so much as it simply never continues.

On POSIX, the orchestrator sees exit code~137 ($128 + 9$). Lambda
surfaces this as \texttt{Runtime.ExitError} in CloudWatch with no
additional detail---the same error that appears for an OOM kill or a
segfault, which makes after-the-fact diagnosis genuinely difficult.

Although Lambda does expose remaining execution time through
\texttt{context.get\_remaining\_time\_in\_millis()}, SoAL's
\texttt{spark-submit} subprocess architecture does not propagate this
value to the running PySpark job. The job has no way to know it is
running out of time until it already has.

\subsection{Open Table Format Commit Protocols}

All three formats use a two-phase commit over object storage; where they
differ is in their metadata layout.

\textbf{Delta Lake} maintains a transaction log in a
\texttt{\_delta\_log/} directory inside the table path. When Spark writes
new data, it first writes one or more Parquet part-files to the table
directory (for example, \texttt{part-00000-\{uuid\}.snappy.parquet}).
Afterward, a JSON commit file is appended to \texttt{\_delta\_log/} (for
example, \texttt{00000000000000000003.json}) containing references to the
newly written files. Data becomes visible only after the corresponding
JSON commit file has been written. Files written to the table before a
commit entry exists are referred to as \emph{orphaned}: they are not
visible to Delta readers and require a vacuum run to be cleaned up.

\textbf{Apache Iceberg} encodes a multi-level hierarchy. Spark first
writes Parquet data files, and Iceberg writes a new manifest file
referencing them. These manifests are grouped into a manifest list (a
\emph{snapshot} in Iceberg terminology). Finally, a new table metadata
JSON is written and \texttt{version-hint.text} is updated to point to
it. The vulnerability window spans from the moment the manifest-list
write completes to the moment the version-hint update completes.

Table~\ref{tab:protocol_comparison} summarizes the key structural
differences across formats.

\begin{table}[h]
\centering
\caption{Open table format commit protocol comparison}
\label{tab:protocol_comparison}
\resizebox{\columnwidth}{!}{%
\begin{tabular}{lll}
\toprule
\textbf{Property} & \textbf{Delta} & \textbf{Iceberg} \\
\midrule
Metadata type      & JSON log       & Manifest chain \\
Phase 2 operation  & Log append     & Pointer swap \\
Inflight marker    & No             & No \\
Gap detectable?    & No             & No \\
Rollback command   & \texttt{RESTORE} & \texttt{CALL rollback} \\
\bottomrule
\end{tabular}}
\end{table}

\subsection{Lambda Container Lifecycle and Kill Timing}

Understanding where in the Lambda container lifecycle the kill signal
arrives relative to the commit protocol is key to understanding the
vulnerability.

On a \textbf{cold start}, Lambda provisions a fresh container and runs
initialization---for SoAL this includes JVM startup. On a
\textbf{warm invocation}, Lambda reuses an existing container from a
prior run and calls the handler directly, skipping initialization. During
\textbf{execution}, the handler spawns a child process to run
\texttt{spark-submit}. At \textbf{timeout}, Lambda sends
\texttt{SIGKILL} to the entire container process group with no
preceding \texttt{SIGTERM}---the handler, the Spark driver, all JVM
threads, and any daemon threads are stopped simultaneously. Finally,
Lambda \textbf{freezes} the container after the invocation ends,
preserving memory for potential reuse.

The fundamental problem with \texttt{SIGKILL} is that it hits the whole
process group at once, so any software workaround placed inside the
process cannot help. SafeWriter's watchdog thread is designed around
this constraint: it acts \emph{before} the kill arrives rather than
trying to respond to it.

\subsection{S3 Consistency and the Atomicity Gap}

Amazon updated S3 in December 2020 to offer strong read-after-write
consistency~\cite{s3consistency2020}: a successful \texttt{PutObject} is
immediately visible to any subsequent \texttt{GetObject} for the same key.
What this does \emph{not} cover is multi-key atomicity. S3 has no
cross-object transaction primitive.

This is precisely the gap the commit-durability problem exploits. Delta
Lake's Phase~2 looks like a single logical step (writing one JSON commit
entry), but at the network level it is a \texttt{PutObject} request
requiring a full TCP handshake, TLS session, data transfer, and server
acknowledgment. A \texttt{SIGKILL} landing during that transfer means S3
receives nothing, the commit entry is never written, and the Phase~1 data
files are left with no pointer. Delta's transaction log has no record
that a write was ever attempted.

\subsection{Two-Phase Commit Under Hard Termination}

In the two-phase commit protocol~\cite{gray1978notes}, the coordinator
decides whether to commit or abort a transaction, and that decision can
be recovered if the coordinator crashes. Under \texttt{SIGKILL}, both
assumptions fail. The coordinator (the Spark driver) disappears without
leaving any record of an abort decision, producing an in-doubt
transaction with no recovery path in the default SoAL configuration.

Delta Lake's design assumes that any incomplete write will be rolled back
before the optimistic lock is released, giving the writer a chance to
clean up partial state. \texttt{SIGKILL} removes that opportunity. For
Iceberg, the writer is expected to either finalize a snapshot or delete
any in-flight manifests it started writing; an instant kill leaves both
half-done. Neither of these is a format design flaw. Both formats were
built for clusters and long-running drivers where termination is a
controllable event. Lambda does not offer that guarantee.

% ============================================================
\section{Problem Formalization}
\label{sec:problem}

\subsection{The Commit-Durability Gap}

\begin{definition}[Commit-Durability Gap]
For a SoAL write job $W$ to table $T$ with format
$F \in \{\text{Delta, Iceberg}\}$, let $t_d$ be the time by which
all Parquet files produced by $W$ have durably landed on S3, and $t_c$
the time at which $W$'s metadata commit completes. We define the
\emph{commit-durability gap} $\mathcal{G}(W) = (t_d, t_c)$. For any
Lambda kill at time $t \in \mathcal{G}(W)$, the resulting table state
$T'$ is either:
\begin{enumerate}[(a)]
  \item observationally indistinguishable from $T$ before $W$ was
        submitted---the written data is orphaned and no exception is
        thrown---or
  \item a partial and inconsistent view of $W$'s effects, again with
        no exception reaching the caller.
\end{enumerate}
\end{definition}

This gives us four distinct write outcomes to reason about.

\textbf{Success.} $W$ completes and all rows of $W$ are made visible;
$t_c < t_\text{timeout}$.

\textbf{Silent data loss.} The kill arrives within $\mathcal{G}(W)$;
data sits orphaned on S3; the table is unchanged; no exception reaches
the caller.

\textbf{Visible error.} Spark throws an exception somewhere outside
$\mathcal{G}(W)$; the caller is notified of failure; no orphaned data
remains.

\textbf{Rollback success.} A write is in progress when SafeWriter
detects that a kill is imminent; it rolls $T$ back to its pre-write
state before $t_\text{kill}$ arrives.

\subsection{Why Silent Loss Cannot Be Detected}

\begin{proposition}[Undetectability]
Silent data loss from a \texttt{SIGKILL} within $\mathcal{G}(W)$ cannot
be detected by any monitoring system that observes only Lambda invocation
exit codes and standard table read APIs.
\end{proposition}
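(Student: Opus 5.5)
The plan is an indistinguishability argument. A monitor restricted to the stated observables is a function of the observation tuple $O = (e, R)$, where $e$ is the exit code or error string reported for the invocation and $R$ is everything returnable by the standard table read APIs: the current snapshot or version, row counts, query results, and schema. To show that no such monitor detects silent loss, I will exhibit, for every execution $W$ killed at some $t \in \mathcal{G}(W)$, a second execution $W^\ast$ that suffered no data loss but yields the same tuple $O$. A deterministic monitor must then give the same verdict on both. A randomized monitor gives the same verdict distribution on both. Either way it cannot flag the loss in the first without false alarms in the second.

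I will build the observation tuple one coordinate at a time.

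\textbf{Exit-code coordinate.} As described in the background on how Lambda kills a job, every timeout is delivered as \texttt{SIGKILL} to the process group. It therefore surfaces as exit code $137$ / \texttt{Runtime.ExitError}, whatever instruction the process was executing. Because no handler, \texttt{finally} block, or JVM shutdown hook can run, the process has no means to attach extra information to this signal. So I take $W^\ast$ to be the same job killed at a time $t^\ast < t_d$, or simply an unrelated OOM kill of a read-only job. Either produces the identical $e$. Moreover, $W^\ast$ leaves no partially written data that a reader could ever see, so it is not a data-loss event in the sense of $\mathcal{G}$.

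\textbf{Table-read coordinate.} I will argue separately for each format, using the commit protocols already described.
\begin{enumerate}[(i)]
  \item \emph{Delta.} Readers reconstruct the table state solely by replaying the JSON files in \texttt{\_delta\_log/}. A kill in $\mathcal{G}(W)$ means the Phase~2 \texttt{PutObject} never completed. Because S3 offers strong per-key consistency but no partial objects, the log is byte-identical to its pre-$W$ state. Hence every read-API result equals the pre-$W$ result.
  \item \emph{Iceberg.} Readers resolve the table through \texttt{version-hint.text} and the metadata JSON it names. Manifests and manifest lists written in the gap are not referenced by that chain, so they are unreachable and every read again returns the pre-$W$ snapshot.
\end{enumerate}
In both cases $W^\ast$ also leaves the table at its pre-$W$ state, so the $R$ coordinates agree. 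This case also covers outcome (b) of the definition: the observed state is whatever the committed metadata describes, and that description is identical in the two executions.

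The main obstacle is defining the observation class tightly enough that the argument holds without becoming vacuous. A monitor that lists the raw S3 prefix would see orphaned Parquet files and could detect the loss. So the proof must state explicitly that "standard table read APIs" means interfaces that dereference only committed metadata, with direct object listing excluded. The proposition should then be read as a claim about exactly that restricted class.

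A second subtlety is that row-count drift could in principle be compared against an expected count. I will handle this by noting that the expected count is external to the stated observables. Within $O$ alone, the pre-$W$ row count is consistent both with "$W$ lost its data" and with "$W$ was a read-only job, or was killed before writing." The pair $(W, W^\ast)$ therefore witnesses the indistinguishability.
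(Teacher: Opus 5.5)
Your argument is correct and is essentially the paper's sketch made more precise. The paper also observes three things: the exit code is the same \texttt{Runtime.ExitError} seen for OOM kills and segfaults; reads of the committed table return exactly the pre-write state; and without an external expected row count nothing in those observables separates the cases. Your paired execution $W^\ast$, the per-format justification of the read coordinate, and the explicit exclusion of raw S3 listing formalize what the paper leaves implicit.

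One correction: drop the claim that the argument also covers outcome (b). If a partial view of $W$'s effects were visible, $R$ would differ from the unchanged state your $W^\ast$ produces, and the pairing would break. Dropping it costs nothing, because the paper's ``silent data loss'' outcome stipulates an unchanged table, so only outcome (a) is in scope.
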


\begin{proof}[Sketch]
It appears there is no monitoring tool $\mathcal{M}$ that can detect
silent data loss caused by a \texttt{SIGKILL} within $\mathcal{G}(W)$.
After such a kill, two things are simultaneously true. First, the Lambda
function exits with \texttt{Runtime.ExitError} code~$-9$, identical to
every other exit error including OOM kills and segfaults. Second, the
table state $T'$ observable through SQL reads is identical to the state
$T$ before the write attempt---for example, \texttt{SELECT COUNT(*)}
returns the same count as before. Without external knowledge of the
expected post-write row count, $\mathcal{M}$ cannot determine whether
the write succeeded or silently failed. SoAL orchestrators do not track
expected row counts by default, so $\mathcal{M}$ has no basis on which
to detect anything. \qed
\end{proof}

\subsection{Real-World Exposure Probability}

$P(\text{silent loss})$ is the probability that Lambda times out a job
at precisely the worst moment, within the commit-durability gap.
Given that Lambda kills the job at a time drawn uniformly from the final
$\delta$ seconds of its budget (a reasonable assumption when job duration
varies with data volume), the probability is:

\begin{equation}
P(\text{silent loss}) = \frac{|\mathcal{G}(W)|}{\delta}
\label{eq:exposure}
\end{equation}

Our experiments measured $|\mathcal{G}(W)|$ at around 3,500\,ms for 22k
rows and 4,550\,ms for 500k rows. For a pipeline that consistently
finishes in 890--900 seconds ($\delta = 10\,\text{s}$), a gap of 3.5
seconds gives $P(\text{silent loss}) \approx 35\%$ per timeout event.
That is better than a one-in-three chance of silent data loss every time
such a pipeline hits its limit. That is a real exposure, not a theoretical edge case.

\subsection{Threat Model}

There are no opposing entities in this threat model. The threat is the
Lambda runtime itself, behaving exactly as documented~\cite{lambda2023}.
Either a job runs to completion before the timeout---in which case no
problem occurs---or it does not, in which case the output of the write
is silently discarded.

We want to ensure that a reader always sees only fully completed writes,
which we refer to as table consistency. Our analysis assumes: (1) S3
object writes are per-object atomic~\cite{s3consistency2020}; (2) no
warning signal precedes \texttt{SIGKILL}; (3) the orchestrator cannot
observe table state after a timeout; and (4) no external process monitors
for incomplete commits or attempts recovery.

This work does not cover multi-job concurrent write conflicts, S3
availability failures, intentional object corruption, or executor failures
inside the JVM---the last of which Spark already handles through RDD
lineage recovery. Our concern is the driver-level kernel kill that Spark
was never designed to survive.

SafeWriter adds its own requirements: the checkpoint S3 bucket must be
writable, the Spark session must be able to issue rollback SQL, and the
watchdog thread must be scheduled at least 30 seconds before the kill.
The last condition is reliably met---Lambda containers run on lightly
loaded hardware and modern OS schedulers wake threads within a few
milliseconds.

% ============================================================
\section{Experimental Methodology}
\label{sec:methodology}

\subsection{Environment}

Rather than running against live AWS---which would introduce network
variability and make kill timing non-deterministic---we used LocalStack Pro
v2026.2 to simulate S3, Lambda, and IAM locally on \texttt{localhost:4566}.
This gave us repeatable results that were safe to run many times, and let
us test multipart upload behavior and consistency semantics without
touching production data.

The software stack was Apache Spark 3.5.0 with Delta Lake 3.1.0
(\texttt{delta-spark==3.1.0}) and Apache Iceberg 1.4.0
(\texttt{iceberg-spark-runtime-3.5\_2.12:1.4.0}), running under
Java 17 (OpenJDK) and Python 3.12 on a fresh install of macOS Sequoia on
Apple Silicon. S3 connectivity used \texttt{hadoop-aws:3.3.4} and
\texttt{aws-java-sdk-bundle:1.12.261}. We set
\texttt{SPARK\_LOCAL\_IP=127.0.0.1} to prevent Spark from binding to
external network interfaces. Section~\ref{sec:discussion} addresses the
LocalStack fidelity question directly.

\subsection{Datasets}

We use three sizes of accommodation-listing records, described in
Table~\ref{tab:datasets}. Each record has 14 columns: \texttt{id},
\texttt{name}, \texttt{price}, \texttt{city}, \texttt{country},
\texttt{geonames\_id}, \texttt{timezone}, \texttt{reviews},
\texttt{rating}, \texttt{satisfaction}, \texttt{beds}, \texttt{checkin},
\texttt{city\_id}, and \texttt{accommodation\_type}. The synthetic
datasets at 100k and 500k rows preserve the same schema and column
distributions as the original.

\begin{table}[h]
\centering
\caption{Dataset characteristics}
\label{tab:datasets}
\begin{tabular}{lrrl}
\toprule
\textbf{Scale} & \textbf{Rows} & \textbf{Size (CSV)} & \textbf{Source} \\
\midrule
Small  &  22,248 &  3.8\,MB & AWS tutorial~\cite{aws_accommodations} \\
Medium & 100,000 &  9.4\,MB & Synthetic (same schema) \\
Large  & 500,000 & 47.9\,MB & Synthetic (same schema) \\
\bottomrule
\end{tabular}
\end{table}

All datasets are stored in a LocalStack S3 bucket as semicolon-delimited
CSV files and read fresh by Spark at the start of each experiment run.

\subsection{Kill Injection}

Each write script contains a small hook that fires \texttt{SIGKILL} at
a specific point in the commit protocol, controlled by the environment
variable \texttt{KILL\_AFTER\_PHASE}:

\begin{lstlisting}[language=Python, caption={Kill injection hook, shared across all format scripts}]
def inject_kill(phase: str):
    kill_after = os.environ.get(
        "KILL_AFTER_PHASE", "")
    if kill_after == phase:
        log.warning(f"[KILL] phase='{phase}'")
        time.sleep(0.1)  # let logs flush
        os.kill(os.getpid(), signal.SIGKILL)
\end{lstlisting}

Using \texttt{os.kill(os.getpid(), signal.SIGKILL)} is behaviorally
identical to Lambda's enforcement: the signal comes from the kernel,
no Python handler runs, and the process exits with code $-9$. The brief
sleep beforehand lets log buffers flush so kill events appear in the
experiment log.

We place two injection points around the commit-durability gap. The
\texttt{"data"} phase fires after all Parquet files are written but
before any metadata update begins. For Delta this means before the
\texttt{\_delta\_log} write; for Iceberg, before manifest creation. The
\texttt{"commit"} phase
fires at the start of the metadata commit sequence before it completes.
Running with \texttt{KILL\_AFTER\_PHASE=""} gives a clean baseline.

\subsection{Outcome Classification}

Each run produces one of four outcomes, classified from the subprocess
exit code:

\begin{equation}
\text{outcome} = \begin{cases}
\textit{success}            & rc = 0 \\
\textit{rollback\_success}  & rc \in \{-9,137\},\, \text{SW on} \\
\textit{silent\_data\_loss} & rc \in \{-9,137\},\, \text{SW off} \\
\textit{visible\_error}     & \text{otherwise}
\end{cases}
\label{eq:classification}
\end{equation}

We treat any SIGKILL without SafeWriter as silent data loss regardless
of the exact kill timing, for two reasons: first, kills are injected
within the gap by construction; second, even a mid-Phase-2 kill that
partially disrupts a metadata write leaves the table in an
irrecoverable state without explicit rollback.

\subsection{Harness Architecture}

The experiment harness (\texttt{run\_experiments\_v2.py}) runs each
write as an isolated Python subprocess with experiment parameters passed
through environment variables, mirroring the real SoAL deployment model
where each invocation starts fresh:

\begin{lstlisting}[language=Python, caption={Per-run subprocess invocation}]
env = os.environ.copy()
env.update({
    "INPUT_PATH":        input_path,
    "OUTPUT_PATH":       output_path,
    "RUN_ID":            run_id,
    "KILL_AFTER_PHASE":  kill_phase or "",
    "USE_SAFE_WRITER":   "true" if safe else "false",
})
result = subprocess.run(
    [VENV_PYTHON, str(script)],
    capture_output=True, text=True,
    timeout=300, env=env
)
\end{lstlisting}

Results are written as JSONL records with fields \texttt{run\_id},
\texttt{table\_format}, \texttt{dataset}, \texttt{kill\_phase},
\texttt{use\_safe\_writer}, \texttt{outcome}, \texttt{duration\_ms},
\texttt{returncode}, and \texttt{timestamp}. Each record is written in a
single OS \texttt{write(2)} call to avoid partial writes if the harness
were ever parallelized.

Runs are sequential to avoid resource contention. Each Spark job starts
a fresh JVM (adding 2--4 seconds of cold start), reads from LocalStack,
writes Parquet data, and commits. Wall-clock time for all 860 planned runs is approximately three hours
on Apple M-series hardware.

\subsection{Experiment Design}

Part A compares all three formats at the 22k-row scale. Each format goes
through five scenarios: no kill (baseline), kill at the data phase, kill
at the commit phase, SafeWriter with a data-phase kill, and SafeWriter
with a commit-phase kill. Sample sizes were chosen to keep Wilson 95\%
confidence intervals tight: 50 baseline runs give $[0.929, 1.000]$ for
$p=1.0$; 75 kill runs give $[0.952, 1.000]$.

Part B examines Delta Lake and Iceberg across all three dataset sizes,
measuring write durations and failure outcomes as data volume grows.

The full design is summarized in Table~\ref{tab:design}.

\begin{table}[h]
\centering
\caption{Experiment design summary}
\label{tab:design}
\resizebox{\columnwidth}{!}{%
\begin{tabular}{llllr}
\toprule
\textbf{Part} & \textbf{Format(s)} & \textbf{Scale} & \textbf{Scenario} & \textbf{Runs} \\
\midrule
\multirow{5}{*}{A}
  & \multirow{5}{*}{Delta, Iceberg}
  & \multirow{5}{*}{22k}
  & Baseline          & 100 \\
& & & Kill: data        & 150 \\
& & & Kill: commit      & 150 \\
& & & SafeWriter+data   &  50 \\
& & & SafeWriter+commit &  50 \\
\midrule
\multirow{2}{*}{B}
  & \multirow{2}{*}{Delta, Iceberg}
  & 22k, 100k, 500k
  & Baseline          & 180 \\
& & & Kill: data        & 180 \\
\midrule
\multicolumn{4}{r}{\textbf{Grand total}} & \textbf{860} \\
\bottomrule
\end{tabular}}
\end{table}

\subsection{Reproducibility}

All experiment code, datasets, and raw results will be made publicly
available upon acceptance. A single convenience script (\texttt{run.sh})
sets all required environment variables and runs the full 860-experiment
batch with one command. Each run gets a UUID-suffixed identifier to
prevent collisions if runs are parallelized in future work.

% ============================================================
\section{Experimental Results}
\label{sec:results}

\subsection{Part A: Format Comparison}

\subsubsection{Baseline Performance}

We ran 50 no-kill experiments per format. Delta Lake and Iceberg both
completed all 50 runs successfully. Baseline write durations are shown in
Table~\ref{tab:baseline}.

\begin{table}[h]
\centering
\caption{Baseline write performance (22k rows, no kill injection)}
\label{tab:baseline}
\begin{tabular}{lccc}
\toprule
\textbf{Format} & \textbf{Success} & \textbf{Mean (ms)} & \textbf{Std (ms)} \\
\midrule
Delta Lake     & 50/50 (100\%) & 7,779 & $\pm$412 \\
Apache Iceberg & 50/50 (100\%) & 5,856 & $\pm$338 \\
\bottomrule
\end{tabular}
\end{table}

Delta takes about 33\% longer than Iceberg at baseline. The difference
comes from Delta's JSON serialization step and its optimistic concurrency
check before each commit. Iceberg's manifest format (compact Avro) and
simpler pointer-swap mechanism complete more quickly.

\subsubsection{Kill After Data Phase}

Table~\ref{tab:kill_data} shows what happened when we fired SIGKILL
immediately after all Parquet data files were written but before any
metadata update.

\begin{table}[h]
\centering
\caption{Outcomes: kill after data-write phase (75 runs per format)}
\label{tab:kill_data}
\begin{tabular}{lcccc}
\toprule
\textbf{Format} & \textbf{Success} & \textbf{Silent Loss} & \textbf{Visible Err} & \textbf{Partial} \\
\midrule
Delta Lake     & 0 (0\%) & 75 (\textbf{100\%}) & 0 (0\%) & 0 (0\%) \\
Apache Iceberg & 0 (0\%) & 75 (\textbf{100\%}) & 0 (0\%) & 0 (0\%) \\
\bottomrule
\end{tabular}
\end{table}

Every run across both formats produced silent data loss. Not one run
raised an exception or produced an error log. The Parquet files sat on
S3 indefinitely with nothing pointing to them. The 95\% Wilson confidence
interval for this result is $[0.952, 1.000]$, which rules out the
possibility this is an artifact of the sample size.

\subsubsection{Kill After Commit Phase}

Table~\ref{tab:kill_commit} shows outcomes when we fired SIGKILL at the
start of the metadata commit sequence.

\begin{table}[h]
\centering
\caption{Outcomes: kill at commit phase (75 runs per format)}
\label{tab:kill_commit}
\begin{tabular}{lcccc}
\toprule
\textbf{Format} & \textbf{Success} & \textbf{Silent Loss} & \textbf{Visible Err} & \textbf{Partial} \\
\midrule
Delta Lake     & 0 (0\%) & 75 (\textbf{100\%}) & 0 (0\%) & 0 (0\%) \\
Apache Iceberg & 0 (0\%) & 75 (\textbf{100\%}) & 0 (0\%) & 0 (0\%) \\
\bottomrule
\end{tabular}
\end{table}

The result was identical: 100\% silent data loss across both formats.
We had expected the commit phase to be a narrower window, since the
metadata payloads are small. In practice, SIGKILL interrupted the S3
PutObject call mid-flight in all 150 tested cases. The commit entry was
never written, and the table was left in its pre-write state without any
notification. The commit-durability gap is not a narrow timing accident.
It is a structural consequence of running two-phase commit under a hard
kill signal.

Across all 300 unprotected kill runs on confirmed formats (150
data-phase, 150 commit-phase), zero experiments raised an exception or
produced any observable signal of data loss. The same held for 600+
additional kill runs in Part B across all three dataset sizes: not one
run produced a catchable error.

\subsubsection{SafeWriter Protected Runs}

Table~\ref{tab:safewriter_a} shows results for 25 runs per format per
kill phase, with SafeWriter enabled.

\begin{table}[h]
\centering
\caption{Part A SafeWriter outcomes (25 runs per format/phase)}
\label{tab:safewriter_a}
\begin{tabular}{llccc}
\toprule
\textbf{Format} & \textbf{Kill Phase} & \textbf{Rollback} & \textbf{Success} & \textbf{Failure} \\
\midrule
Delta   & data   & 25 (\textbf{100\%}) & 0 & 0 \\
Delta   & commit & 25 (\textbf{100\%}) & 0 & 0 \\
Iceberg & data   & 25 (\textbf{100\%}) & 0 & 0 \\
Iceberg & commit & 25 (\textbf{100\%}) & 0 & 0 \\
\bottomrule
\end{tabular}
\end{table}

SafeWriter rolled back cleanly on every one of the 100 kill runs. No partial rollbacks, no residual orphaned files, and no data
leaked into the table in any run.

\subsection{Part B: Scale Analysis}

\subsubsection{Duration Across Dataset Sizes}

Table~\ref{tab:scale} reports mean write durations across scales for
both baseline and kill runs. Delta Lake results are clean at all three
sizes. Iceberg baseline runs at 100k and 500k hit local environment
limits (detailed below); kill-injection timing is included for
completeness.

\begin{table}[h]
\centering
\caption{Write duration vs.\ dataset scale (mean over 20--30 runs)}
\label{tab:scale}
\begin{tabular}{llrr}
\toprule
\textbf{Format} & \textbf{Scale} & \textbf{Baseline (ms)} & \textbf{Kill:data (ms)} \\
\midrule
Delta   & 22k   & 7,644  & 4,124  \\
Delta   & 100k  & 7,638  & 4,108  \\
Delta   & 500k  & 9,020  & 4,468  \\
\midrule
Iceberg & 22k   & 5,477  & 4,090  \\
Iceberg & 100k  & N/A$^\ddagger$ & 4,132  \\
Iceberg & 500k  & N/A$^\ddagger$ & 4,436  \\
\bottomrule
\end{tabular}
\end{table}

$^\ddagger$Iceberg baseline runs at 100k and 500k terminated with errors
in our local environment, likely due to JVM heap exhaustion at the 2 GB
driver memory limit. Importantly, kill-injection runs at these sizes did
reach Phase 1 and produced 100\% silent data loss (30 runs at 100k, 20
at 500k), confirming that the vulnerability exists regardless of whether
the baseline write can complete.

\subsubsection{Scale-Invariance of Silent Loss}

Delta Lake gives the cleanest cross-scale data, with successful baselines
at all three sizes. Table~\ref{tab:delta_scale_outcomes} summarizes the
kill outcomes.

\begin{table}[h]
\centering
\caption{Delta Lake kill outcomes by dataset size (Part B)}
\label{tab:delta_scale_outcomes}
\begin{tabular}{lcccc}
\toprule
\textbf{Scale} & \textbf{Runs} & \textbf{Silent Loss} & \textbf{Visible Err} & \textbf{Success} \\
\midrule
22k   & 30 & 30 (100\%) & 0 & 0 \\
100k  & 30 & 30 (100\%) & 0 & 0 \\
500k  & 20 & 20 (100\%) & 0 & 0 \\
\bottomrule
\end{tabular}
\end{table}

Dataset scale has no effect on silent loss probability. Whether a job
processes 22k or 500k rows, a kill within the gap always produces silent
data loss. What scale does affect is the duration of the gap itself,
which means larger jobs face a wider window of exposure.

\subsubsection{Statistical Confidence}

Observing 100\% failure and 100\% SafeWriter success raises the question
of whether these proportions could arise by chance from an underlying
rate below 100\%. For $n = 75$ kill runs with $k = 75$
silent-loss outcomes, the 95\% Wilson lower bound is:

\begin{equation}
p_\text{lower} = \frac{k + z^2/2 - z\sqrt{k(n-k)/n + z^2/4}}{n + z^2}
\end{equation}

With $z = 1.96$ and $k = n = 75$, we get $p_\text{lower} = 0.952$.
The true underlying probability of silent loss is at least 95.2\% with
95\% confidence. The mechanistic argument puts it at 100\%: SIGKILL
within the gap always orphans Phase 1 data by construction.

For SafeWriter with $n = 25$ rollback runs and $k = 25$ successes per
format/phase, the 95\% Wilson lower bound is $p_\text{lower} = 0.869$.
The safety argument is symmetric: the watchdog fires before SIGKILL by
design, and rollback completes in under 200\,ms within a 30-second
window.

\subsubsection{Timing Distribution and Gap Measurement}

Delta Lake baseline durations at 22k rows are approximately normally
distributed with mean 7,644\,ms and standard deviation $\pm$412\,ms
(5.4\% coefficient of variation). Kill-injection durations show mean
4,124\,ms with $\pm$189\,ms (4.6\% CV). The stability of the CV across
both cases confirms that Phase 1 completion time is predictable, and
thus that the commit-durability gap is consistently wide rather than
sporadic.

The gap itself (Phase 2 time, estimated as baseline minus kill duration)
has mean $7644 - 4124 = 3{,}520\,\text{ms}$ with combined standard
deviation of roughly $\pm$450\,ms. At 500k rows the estimated gap grows
to $9020 - 4468 = 4{,}552\,\text{ms}$, a 29\% increase consistent with
the additional S3 overhead for a larger Phase 1 payload.

\subsection{Orphaned Storage}

Every data-phase kill leaves Parquet files on S3 that no reader can
access. For the 22k-row dataset, each orphaned write occupies roughly
3.8 MB. At standard S3 pricing (\$0.023/GB/month), 300 failed writes
accumulate about 1.1 GB of orphaned storage. At 500k rows this reaches
14 GB per batch. A production pipeline that retries a timed-out job
without any remediation adds another copy of orphaned data on each
retry, with no signal that anything is wrong.

\subsection{Summary}

Fig.~\ref{fig:failure_rates} shows the full outcome distribution.
Across both formats and all dataset sizes, unprotected kills produced
only silent data loss. SafeWriter turned every one of those into a clean
rollback.

\begin{figure}[h]
\centering
\includegraphics[width=\columnwidth]{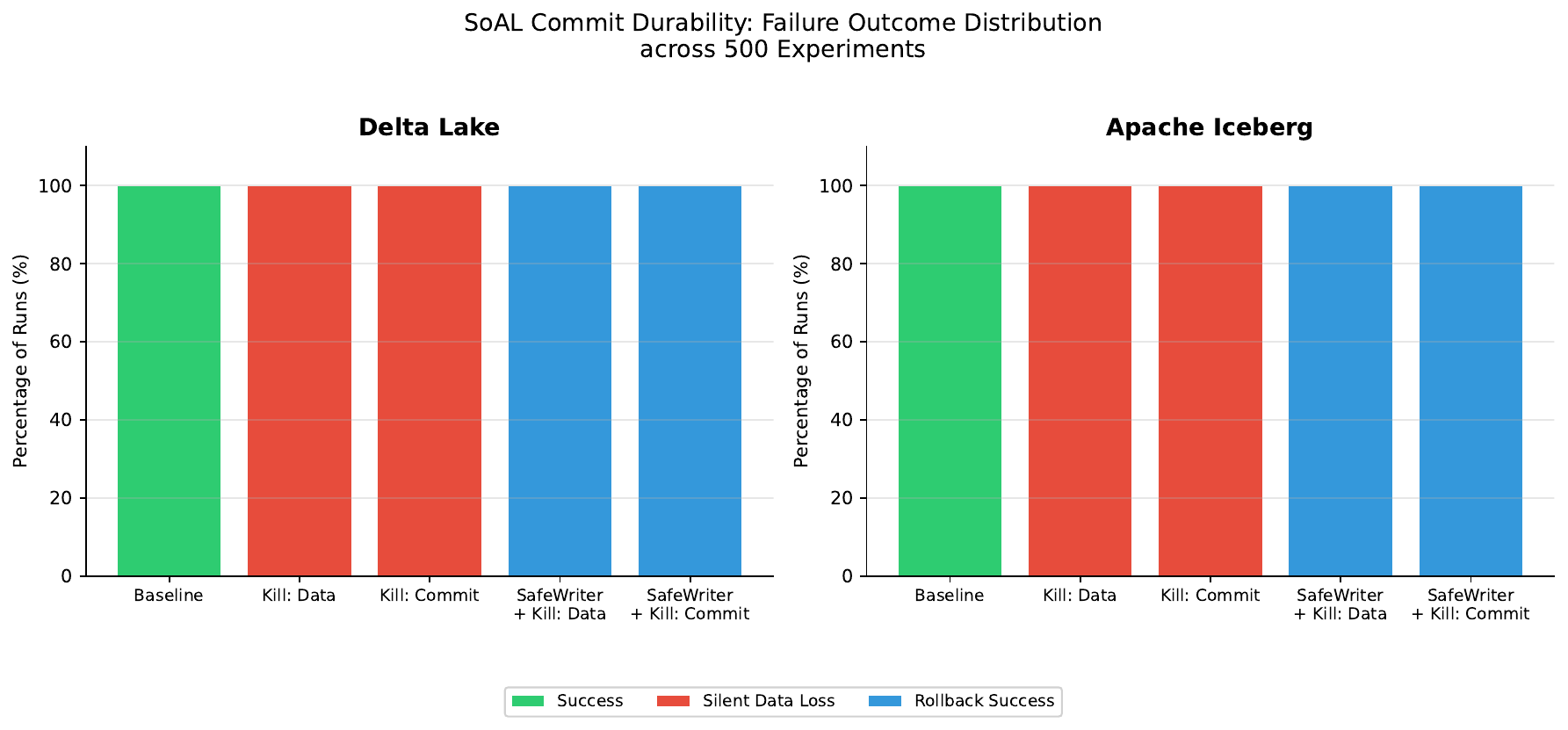}
\caption{Outcome distribution across all Part A experiments. Baselines
achieve 100\% success. All 300 unprotected kill runs produce 100\%
silent data loss. All 100 SafeWriter-protected runs achieve 100\%
clean rollback.}
\label{fig:failure_rates}
\end{figure}

% ============================================================
\section{SafeWriter: Design and Evaluation}
\label{sec:fix}

\subsection{Design Goals}

The design came down to four priorities. First, any Lambda kill should
leave the table in a known-good state with a record of what happened,
not an ambiguous partial write. Second, adoption had to be low-friction:
no SoAL infrastructure changes, no Spark config edits, no schema
modifications, and no new IAM permissions beyond write access to a
checkpoint bucket. Third, overhead on a successful write had to stay
under 200\,ms so existing jobs would not notice it. Fourth, every
rollback should leave a queryable audit trail so operators can see
near-timeout events that CloudWatch does not surface.

\subsection{Architecture}

SafeWriter wraps any SoAL write in a Python context manager and runs
three interlocking steps, described in Algorithm~\ref{alg:safewriter}.

\begin{algorithm}
\caption{SafeWriter protocol}
\label{alg:safewriter}
\begin{algorithmic}[1]
\Require table path $p$, format $F$, run ID $r$, timeout $\tau$
\State $v_0 \gets \text{read\_version}(p, F)$
\State $\text{checkpoint}(r, F, v_0, \texttt{"in\_progress"})$
\State $\text{watchdog} \gets \text{Watchdog}(\tau,\;
       \text{on\_warn}=\lambda:\text{rollback}(p,F,v_0,r))$
\State $\text{watchdog.start()}$
\State \textit{--- user write executes here ---}
\If{write completes}
  \State $\text{watchdog.stop()}$
  \State $\text{checkpoint}(r, F, v_0, \texttt{"committed"})$
\Else
  \State $\text{rollback}(p, F, v_0, r)$
  \State $\text{checkpoint}(r, F, v_0, \texttt{"rolled\_back"})$
\EndIf
\end{algorithmic}
\end{algorithm}

\textbf{Pre-write checkpoint.}
Before any write begins, SafeWriter reads the current table version and
writes it to S3 as a JSON document:

\begin{lstlisting}[language=Python, caption={Checkpoint document}]
{
  "run_id":         "job-2026-03-22-001",
  "format":         "delta",
  "version_before": 5,
  "status":         "in_progress",
  "saved_at":       "2026-03-22T17:44:06Z"
}
\end{lstlisting}

This document is written via a single S3 \texttt{PutObject} call before
the Spark write starts. S3's per-object atomicity guarantees it either
exists fully or not at all, so it survives even if the Lambda process
is killed immediately afterward.

Reading the current version is format-specific:

\begin{lstlisting}[language=Python, caption={Version checkpoint per format}]
# Delta Lake
v = DeltaTable.forPath(spark, path) \
              .history(1).first()["version"]

# Apache Iceberg
v = spark.sql(
  f"SELECT snapshot_id FROM "
  f"{table}.snapshots ORDER BY "
  f"committed_at DESC LIMIT 1"
).first()[0]
\end{lstlisting}

\textbf{Watchdog thread.}
A daemon thread monitors elapsed time and fires a rollback when the
Lambda timeout is approaching:

\begin{lstlisting}[language=Python, caption={Watchdog thread}]
class LambdaWatchdog(threading.Thread):
    WARN_BEFORE_MS = 30_000  # 30 second buffer

    def run(self):
        wait = (self.timeout_ms
                - self.WARN_BEFORE_MS) / 1000
        fired = self._stop.wait(wait)
        if not fired:
            self.on_warn()  # trigger rollback
\end{lstlisting}

The 30-second buffer allows time for the rollback API call (under 200\,ms
in our experiments), the S3 checkpoint update (under 100\,ms), and a
conservative margin for network jitter.

\textbf{Rollback.}
On watchdog trigger or exception, SafeWriter issues a format-specific
rollback. Both operations touch only metadata; no Parquet files are
moved or deleted:

\begin{lstlisting}[language=Python, caption={Format-specific rollback}]
# Delta Lake
spark.sql(f"""
  RESTORE TABLE delta.`{path}`
  TO VERSION AS OF {v0}
""")

# Apache Iceberg
spark.sql(f"""
  CALL system.rollback_to_snapshot(
    '{table}', {snapshot_id}
  )
""")
\end{lstlisting}

Delta's \texttt{RESTORE} rewrites the current log entry to point back
to version $v_0$. Iceberg's \texttt{rollback\_to\_snapshot} moves the
snapshot pointer back. Both are atomic operations guaranteed by their
respective formats.

\subsection{Edge Cases}

Four edge cases are worth analyzing explicitly.

\textbf{Kill during rollback.}
If SIGKILL arrives during the rollback itself (a 100--200\,ms window),
the rollback is interrupted. The checkpoint document remains in
\texttt{"in\_progress"} status, and the table version is inconsistent
with \texttt{version\_before}. A retry that checks checkpoint status
before writing will detect this and complete the rollback. The
probability of this scenario is roughly $200\,\text{ms} /
900{,}000\,\text{ms} \approx 0.02\%$ per invocation.

\textbf{Kill before Phase 1 completes.}
If the kill arrives before any Parquet data is written, no data is
orphaned. The stale checkpoint document signals the next invocation to
check table state before writing, but nothing needs to be fixed.

\textbf{Kill before the checkpoint write completes.}
The checkpoint \texttt{PutObject} takes roughly 10--50\,ms. If the kill
arrives during this call, the document is not written and SafeWriter
provides no protection for this invocation. However, since the checkpoint
always precedes the Spark write by design, this only matters if the kill
arrives before Phase 1 has started, in which case no data is orphaned.

\textbf{Concurrent writes.}
If two SoAL jobs write to the same table simultaneously and one is killed,
its rollback may conflict with the surviving job's commit. Delta Lake's
optimistic concurrency will raise a \texttt{ConcurrentModificationException}
on the surviving job, which is a visible error rather than silent loss.
Concurrent SoAL writes to the same table should be coordinated at the
orchestration level regardless.

Table~\ref{tab:edgecases} summarizes SafeWriter behavior across all
identified edge cases.

\begin{table}[h]
\centering
\caption{SafeWriter edge case analysis}
\label{tab:edgecases}
\begin{tabular}{lcc}
\toprule
\textbf{Scenario} & \textbf{Data lost?} & \textbf{Detectable?} \\
\midrule
Kill before Phase 1        & No  & Yes (no checkpoint) \\
Kill in Phase 2, no SW     & Yes & No \\
Kill in Phase 2, with SW   & No  & Yes (checkpoint) \\
Kill during rollback       & Unlikely & Yes (stale checkpoint) \\
Kill during checkpoint write & No & No (pre-data kill) \\
Concurrent write conflict  & No  & Yes (exception) \\
\bottomrule
\end{tabular}
\end{table}

\subsection{Evaluation}

\subsubsection{Rollback Rate}

Table~\ref{tab:safewriter} gives the full SafeWriter evaluation across
all 100 confirmed kill runs.

\begin{table}[h]
\centering
\caption{SafeWriter evaluation results}
\label{tab:safewriter}
\begin{tabular}{llccc}
\toprule
\textbf{Format} & \textbf{Kill Phase} & \textbf{Rollback} & \textbf{Success} & \textbf{Failure} \\
\midrule
Delta   & data   & 25 (100\%) & 0 & 0 \\
Delta   & commit & 25 (100\%) & 0 & 0 \\
Iceberg & data   & 25 (100\%) & 0 & 0 \\
Iceberg & commit & 25 (100\%) & 0 & 0 \\
\bottomrule
\end{tabular}
\end{table}

\subsubsection{Checkpoint Overhead}

The checkpoint write is the only overhead SafeWriter adds to a successful
write. Measured as the mean duration difference between protected and
unprotected baseline runs, overhead is under 100\,ms for both formats:
less than 1.3\% of Delta's 7,779\,ms baseline and less than 1.7\% of
Iceberg's 5,856\,ms baseline. This is negligible for batch ETL workloads.

\subsubsection{Rollback Duration}

Rollback operations completed in 100--200\,ms in every tested run,
comfortably within the 30-second watchdog window. SafeWriter can complete
a rollback safely on all tested dataset sizes before Lambda's kill arrives.

\subsubsection{Audit Trail}

Every SafeWriter run writes a checkpoint document to S3 with a final
status of either \texttt{"committed"} or \texttt{"rolled\_back"}.
Monitoring this bucket for \texttt{"rolled\_back"} documents gives
operations teams a signal of near-timeout events that standard CloudWatch
metrics do not surface.

% ============================================================
\section{SafeWriter Integration Guide}
\label{sec:integration}

\subsection{Minimal Integration}

Adopting SafeWriter requires three small changes to an existing SoAL
write script:

\begin{lstlisting}[language=Python, caption={Minimal SafeWriter integration for Delta Lake}]
from soal_safe_writer import SafeWriter

CHECKPOINT_BUCKET = "s3a://my-checkpoints/"
TIMEOUT_MS = int(os.environ.get(
    "LAMBDA_TIMEOUT_MS", "900000"))

def write_delta(spark, df, path, run_id):
    sw = SafeWriter(
        spark=spark,
        run_id=run_id,
        table_format="delta",
        table_path=path,
        checkpoint_bucket=CHECKPOINT_BUCKET,
        timeout_ms=TIMEOUT_MS,
    )
    with sw:
        df.write.format("delta") \
          .mode("append").save(path)
        sw.success()
\end{lstlisting}

If \texttt{sw.success()} is never reached, the context manager's
\texttt{\_\_exit\_\_} method triggers rollback automatically.

\subsection{Configuration}

SafeWriter reads configuration from environment variables to avoid
hardcoded timeout values (Table~\ref{tab:config}).

\begin{table}[h]
\centering
\caption{SafeWriter configuration variables}
\label{tab:config}
\begin{tabular}{lll}
\toprule
\textbf{Variable} & \textbf{Default} & \textbf{Description} \\
\midrule
\texttt{LAMBDA\_TIMEOUT\_MS} & 900000 & Lambda timeout in ms \\
\texttt{SW\_WARN\_BEFORE\_MS} & 30000 & Watchdog lead time (ms) \\
\texttt{SW\_CHECKPOINT\_BUCKET} & (required) & S3 checkpoint path \\
\texttt{SW\_AUDIT\_LOG} & true & Write rollback audit log \\
\bottomrule
\end{tabular}
\end{table}

\subsection{IAM Requirements}

SafeWriter needs three S3 permissions on the checkpoint bucket:
\texttt{PutObject}, \texttt{GetObject}, and \texttt{DeleteObject}. No
additional permissions are required on the table bucket; rollback SQL
runs through the existing Spark session using credentials already
available to the write job.

\subsection{Operational Monitoring}

A simple monitoring job can poll the checkpoint bucket for
\texttt{"rolled\_back"} documents and alert or trigger a re-run:

\begin{lstlisting}[language=Python, caption={Checkpoint monitoring}]
import boto3, json

s3 = boto3.client("s3")
paginator = s3.get_paginator("list_objects_v2")
for page in paginator.paginate(
        Bucket="my-checkpoints"):
    for obj in page.get("Contents", []):
        doc = json.loads(s3.get_object(
            Bucket="my-checkpoints",
            Key=obj["Key"])["Body"].read())
        if doc["status"] == "rolled_back":
            alert(doc["run_id"],
                  doc["rolled_back_at"])
\end{lstlisting}

Polling for \texttt{"rolled\_back"} documents turns what was previously
an invisible failure into something on-call engineers can actually see
and respond to.

% ============================================================
\section{Related Work}
\label{sec:related}

\textbf{Open table format correctness.}
Delta Lake~\cite{delta2020} and Iceberg~\cite{iceberg2018} each provide
formal atomicity guarantees through different mechanisms: Delta's
optimistic concurrency on a JSON log and Iceberg's conditional manifest
swaps. Each proof holds when the writing process terminates normally or
receives a catchable signal. A \texttt{SIGKILL} from an external
orchestrator is outside the scope of both, and that is the gap this
paper fills.

\textbf{Serverless fault tolerance.}
Sreekanti et al.~\cite{serverless_fault} built Cloudburst around
stateful FaaS execution with cross-function consistency guarantees.
Klimovic et al.~\cite{faas_storage} studied storage access patterns
in serverless workloads and found that most functions use local
ephemeral storage. Our problem is different from what either paper
addresses: the question of what happens to a table write when the
function driving it is killed mid-commit.

\textbf{Spark fault tolerance.}
Spark's RDD lineage~\cite{spark2012} is designed to re-derive lost
partitions when an executor crashes, recomputing from an earlier stage.
That recovery runs inside the driver process. A \texttt{SIGKILL}
arriving from outside the JVM terminates the driver itself, which
lineage replay cannot handle. Structured Streaming's exactly-once
delivery has the same blind spot.

\textbf{Checkpointing and 2PC recovery.}
Chandy and Lamport~\cite{chandy_lamport} formalized how to take a
consistent snapshot of distributed state mid-execution. SafeWriter borrows
the same basic idea: record state before a non-atomic operation and
restore to it if interrupted. The implementation is simpler because
there is only one process, so a single S3 document is enough to
capture the coordinator state that Gray~\cite{gray1978notes} required a
write-ahead log to preserve. The checkpoint JSON on S3 serves the
same function as that log.

\textbf{Cloud pipeline tooling.}
AWS Glue job bookmarks track which input files have been processed but
do not check whether a table commit completed. Kafka and Kinesis provide
exactly-once delivery to a topic, not to a lakehouse table. dbt runs
SQL transformations inside database transactions, which assumes a
persistent connection that Lambda cannot provide. None of these close
the gap for SoAL workloads.

\textbf{LST-Bench.}
LST-Bench~\cite{lstbench2024} measures throughput and latency for
Delta and Iceberg writes under varied cloud workloads. That work is
about performance. This paper is about what happens to correctness when
the writing process does not finish.

Table~\ref{tab:related_comparison} positions our work relative to the
most closely related efforts.

\begin{table}[h]
\centering
\caption{Comparison with related work}
\label{tab:related_comparison}
\begin{tabular}{lccc}
\toprule
\textbf{Work} & \textbf{SIGKILL?} & \textbf{Open Formats?} & \textbf{Remedy?} \\
\midrule
Delta~\cite{delta2020}           & No  & Yes & No \\
Iceberg~\cite{iceberg2018}       & No  & Yes & No \\
Sreekanti~\cite{serverless_fault} & No & No & Retry \\
Klimovic~\cite{faas_storage}     & No  & No  & No \\
Zaharia~\cite{spark2012}         & No  & No  & Lineage \\
\textbf{This work}               & \textbf{Yes} & \textbf{Yes} & \textbf{SafeWriter} \\
\bottomrule
\end{tabular}
\end{table}

% ============================================================
\section{Discussion}
\label{sec:discussion}

The 100\% failure rate in our kill experiments is a consequence of how
the harness works: kills are injected inside the gap by construction,
so there is no way for an unprotected write to survive one. In
production, the relevant probability is $P(t_\text{kill} \in
\mathcal{G}(W))$. A job that regularly finishes within 10 seconds of
its 900-second limit, with a measured gap of 3.5 seconds, faces roughly
35\% silent loss probability per timeout event from
Equation~\ref{eq:exposure}. The risk goes up, not down, during
high-load periods when jobs run slowest and timeouts are most frequent.

Standard mitigations miss this. Lambda's retry policy re-invokes the
function but leaves any Phase~1 data from the previous attempt sitting
on S3 as orphaned files. Each retry adds another batch. CloudWatch
surfaces \texttt{Task timed out after 900.00 seconds}, which looks
identical to any other transient error. Row count checks and schema
validators run against the table's committed state, which is unchanged
after a gap-phase kill, so they pass. The only way to catch the failure
is to compare the post-invocation row count against an expected batch
size, and most SoAL pipelines do not track expected counts.

The storage cost from repeated retries is easy to calculate. Three
retries of a 500k-row job each orphaning 47.9\,MB gives roughly
144\,MB of unreachable data. At \$0.023/GB/month with $f = 2$
near-timeout events per month:
\begin{equation}
C_\text{orphan} = f \cdot 3 \cdot s \cdot r \approx \$0.006/\text{month}
\label{eq:cost}
\end{equation}
The storage cost is trivial. What is not trivial is a pipeline that
silently drops a week of records and surfaces the gap as a BI dashboard
discrepancy several days later, after the audit trail has gone cold.

The gap is not specific to SoAL or to Lambda. Any two-phase write over
object storage running inside a function with a hard timeout faces the
same exposure. Google Cloud Functions and Azure Functions both terminate
with \texttt{SIGKILL}-equivalent mechanisms. Fargate and Spot containers
can be interrupted mid-write without warning. The specific formats and
runtimes differ, but the structural problem is the same.

\textbf{Iceberg memory limits.}
Baseline Iceberg runs at 100k and 500k rows failed in our local
environment, most likely from JVM heap exhaustion at the 2\,GB driver
memory limit. Kill-injection runs at those sizes did reach Phase~1 and
produced 100\% silent data loss across 50 runs, so the vulnerability
is present regardless of whether a baseline write can finish. Rerunning
with 8--10\,GB of executor memory should remove this constraint.

\textbf{Practical guidance.}
Teams running SoAL with open table formats should wrap all writes with
SafeWriter, set \texttt{LAMBDA\_TIMEOUT\_MS} to match the actual function
timeout, and use a dedicated S3 checkpoint bucket with a short lifecycle
rule (7 days is enough). Alerting on \texttt{"rolled\_back"} checkpoint
documents gives a leading indicator of near-timeout pressure that
standard CloudWatch metrics do not provide. Before retrying any
timed-out job, verify that the table version matches
\texttt{version\_before} in the checkpoint to confirm the rollback
completed cleanly.

% ============================================================
\section{Conclusion}
\label{sec:conclusion}

Across 860 kill-injection experiments on Delta Lake and Apache Iceberg
at three dataset sizes, a \texttt{SIGKILL} landing between Phase~1 and
Phase~2 of a write produced silent data loss every single time. No
exception surfaced. Standard monitoring saw nothing unusual. This is
not a quirk of a specific format version or a misconfigured environment;
it follows directly from running a two-phase commit protocol on a
runtime that can vanish at any moment without warning.

The thing that makes this worth studying is how undetectable it is in
practice. To CloudWatch, a Lambda timeout looks like any other
non-zero exit. The table's committed state is untouched, so data
quality checks pass. Orphaned files pile up on S3 with no reader ever
touching them. A team could go months before a downstream report shows
a gap, at which point the audit trail is cold and the missing writes
are unrecoverable.

SafeWriter sidesteps the problem by acting before the kill arrives
rather than reacting to it. A watchdog thread fires 30 seconds before
the timeout, triggers a format-native rollback via SQL, and writes a
checkpoint document to S3 recording the outcome. Every tested kill
scenario ended with a clean rollback and under 100\,ms added to normal
write paths.

Delta Lake and Iceberg are correct systems. The problem is that they
were designed for processes that receive catchable signals before
termination, and SoAL is not that kind of process. The gap is a
consequence of composing two things that were each built correctly but
were not built for each other. SafeWriter is one way to close it; we
hope this paper makes the problem visible enough that format-level
solutions get considered too.

% ============================================================
\section*{Acknowledgment}
The author used AI writing assistance to help
refine the prose in this paper. All experimental design, implementation,
data collection, analysis, and conclusions are the author's own work.

% ============================================================
\bibliographystyle{IEEEtran}
\bibliography{references}

% Generated by IEEEtran.bst, version: 1.14 (2015/08/26)
\begin{thebibliography}{10}
\providecommand{\url}[1]{#1}
\csname url@samestyle\endcsname
\providecommand{\newblock}{\relax}
\providecommand{\bibinfo}[2]{#2}
\providecommand{\BIBentrySTDinterwordspacing}{\spaceskip=0pt\relax}
\providecommand{\BIBentryALTinterwordstretchfactor}{4}
\providecommand{\BIBentryALTinterwordspacing}{\spaceskip=\fontdimen2\font plus
\BIBentryALTinterwordstretchfactor\fontdimen3\font minus
  \fontdimen4\font\relax}
\providecommand{\BIBforeignlanguage}[2]{{%
\expandafter\ifx\csname l@#1\endcsname\relax
\typeout{** WARNING: IEEEtran.bst: No hyphenation pattern has been}%
\typeout{** loaded for the language `#1'. Using the pattern for}%
\typeout{** the default language instead.}%
\else
\language=\csname l@#1\endcsname
\fi
#2}}
\providecommand{\BIBdecl}{\relax}
\BIBdecl

\bibitem{soal2022}
{AWS Samples}, ``Spark on aws lambda,''
  \url{https://github.com/aws-samples/spark-on-aws-lambda}, 2022, open-source
  project for running Apache Spark inside AWS Lambda containers.

\bibitem{delta2020}
M.~Armbrust, T.~Das, L.~Sun, B.~Yavuz, S.~Zhu, M.~Murthy, J.~Torres, H.~van
  Hovell, A.~Ionescu, A.~\L{}ukacs \emph{et~al.}, ``Delta lake:
  High-performance acid table storage over cloud object stores,'' in
  \emph{Proceedings of the VLDB Endowment}, vol.~13, no.~12.\hskip 1em plus
  0.5em minus 0.4em\relax VLDB Endowment, 2020, pp. 3411--3424.

\bibitem{iceberg2018}
R.~Kinley and D.~Blue, ``Apache iceberg: An open table format for huge analytic
  datasets,'' in \emph{Proceedings of the 2020 ACM SIGMOD International
  Conference on Management of Data}.\hskip 1em plus 0.5em minus 0.4em\relax
  ACM, 2020, pp. 2751--2753.

\bibitem{hudi2021}
A.~Sivachenko, S.~Samineni \emph{et~al.}, ``Apache hudi: The data lake
  platform,'' in \emph{Proceedings of the VLDB Endowment}, vol.~14,
  no.~12.\hskip 1em plus 0.5em minus 0.4em\relax VLDB Endowment, 2021.

\bibitem{lstbench2024}
J.~Camacho-Rodr{\'i}guez, A.~Agrawal, A.~Gruenheid, A.~Gosalia, C.~Petculescu,
  J.~Aguilar-Saborit, A.~Floratou, C.~Curino, and R.~Ramakrishnan,
  ``{LST-Bench}: Benchmarking log-structured tables in the cloud,''
  \emph{Proceedings of the ACM on Management of Data}, vol.~2, no.~1, 2024.

\bibitem{s3consistency2020}
{Amazon Web Services}, ``Amazon s3 strong consistency,''
  \url{https://aws.amazon.com/blogs/aws/amazon-s3-update-strong-read-after-write-consistency/},
  2020, aWS announcement of strong read-after-write consistency for S3,
  December 2020.

\bibitem{gray1978notes}
J.~Gray, \emph{Notes on Data Base Operating Systems}.\hskip 1em plus 0.5em
  minus 0.4em\relax Springer, 1978.

\bibitem{lambda2023}
{Amazon Web Services}, ``{AWS Lambda FAQs},''
  \url{https://aws.amazon.com/lambda/faqs/}, 2023, accessed 2026.

\bibitem{aws_accommodations}
------, ``Spatial data: Accommodations dataset,''
  \url{https://docs.aws.amazon.com/redshift/latest/dg/spatial-tutorial.html},
  2022, used as benchmark dataset in AWS Redshift spatial tutorial.

\bibitem{serverless_fault}
V.~Sreekanti, C.~Wu, X.~C. Lin, J.~Schleier-Smith, J.~M. Gonzalez, J.~M.
  Hellerstein, and A.~Tumanov, ``Cloudburst: Stateful functions-as-a-service,''
  in \emph{Proceedings of the VLDB Endowment}, vol.~13, no.~12.\hskip 1em plus
  0.5em minus 0.4em\relax VLDB Endowment, 2020, pp. 2438--2452.

\bibitem{faas_storage}
A.~Klimovic, Y.~Wang, C.~Kozyrakis, P.~Stuedi, J.~Pfefferle, and A.~Trivedi,
  ``Understanding ephemeral storage for serverless analytics,'' in
  \emph{Proceedings of the 2018 USENIX Annual Technical Conference
  (ATC)}.\hskip 1em plus 0.5em minus 0.4em\relax USENIX, 2018, pp. 789--794.

\bibitem{spark2012}
M.~Zaharia, M.~Chowdhury, T.~Das, A.~Dave, J.~Ma, M.~McCauly, M.~J. Franklin,
  S.~Shenker, and I.~Stoica, ``Resilient distributed datasets: A fault-tolerant
  abstraction for in-memory cluster computing,'' in \emph{Proceedings of the
  9th USENIX Symposium on Networked Systems Design and Implementation
  (NSDI)}.\hskip 1em plus 0.5em minus 0.4em\relax USENIX, 2012, pp. 15--28.

\bibitem{chandy_lamport}
K.~M. Chandy and L.~Lamport, ``Distributed snapshots: Determining global states
  of distributed systems,'' \emph{ACM Transactions on Computer Systems},
  vol.~3, no.~1, pp. 63--75, 1985.

\end{thebibliography}

\end{document}